\definecolor{webgreen}{rgb}{0,.5,0}
\definecolor{webbrown}{rgb}{.6,0,0}
\def\suchthat{\, : \,}
\DeclareMathOperator{\per}{per}
\begin{document}

\theoremstyle{plain}
\newtheorem{theorem}{Theorem}
\newtheorem{corollary}[theorem]{Corollary}
\newtheorem{lemma}[theorem]{Lemma}
\newtheorem{proposition}[theorem]{Proposition}

\newtheorem{definition}[theorem]{Definition}
\theoremstyle{definition}
\newtheorem{example}[theorem]{Example}
\newtheorem{conjecture}[theorem]{Conjecture}

\theoremstyle{remark}
\newtheorem{remark}[theorem]{Remark}

\title{Prefixes of the Fibonacci word}

\author{
Jeffrey Shallit\\
School of Computer Science\\
University of Waterloo \\
Waterloo, ON  N2L 3G1 \\
Canada\\
\href{mailto:shallit@uwaterloo.ca}{\tt shallit@uwaterloo.ca} }

\maketitle

\begin{abstract}
Mignosi, Restivo, and Salemi (1998) proved that for all $\epsilon > 0$
there exists an integer $N$ such that all 
prefixes of the Fibonacci word of length $\geq N$
contain a suffix of exponent $\alpha^2-\epsilon$,
where $\alpha = (1+\sqrt{5})/2$ is the golden ratio.
In this note we show how to prove an explicit version of this
theorem with tools from
automata theory and logic.  Along the way we gain a better
understanding of the repetitive structure of the Fibonacci word.
\end{abstract}

\section{Introduction}

The Fibonacci word ${\bf f}[0..\infty) = 01001010 \cdots$ is a
well-studied 
infinite binary word with many interesting properties \cite{Berstel:1986b}.
It can be defined in many ways; for example, as the limit of the
sequence of finite words $(X_i)_{i \geq 1}$, with $X_1 = 1$, $X_2 = 0$,
and $X_n = X_{n-1} X_{n-2}$ for $n \geq 3$.  

Let $x = x[0..n-1]$ be a nonempty finite word.
We say $x$ has {\it period\/} $p$ if
$x[i]=x[i+p]$ for $0 \leq i < n-p$.   The least positive period of
a word $x$ is called {\it the\/} period, and is denoted $\per(x)$.   The
{\it exponent\/} of a word, denoted $\exp(x)$, is defined to be $|x|/\per(x)$.  Thus
{\tt entente} has exponent $7/3$.   We say a word $x$ is 
an {\it $\alpha$-power}, for $\alpha \geq 1$ a real number, if
$|x| = \lceil \alpha \per(x) \rceil$.   Thus, for example,
{\tt underfund} is a $\sqrt{2}$-power.

Let $\alpha = (1+\sqrt{5})/2$ be the golden ratio.
In 1997 this author conjectured, and Mignosi, Restivo, and Salemi
later proved \cite{Mignosi&Restivo&Salemi:1998}, that
an infinite word $\bf x$ has the property
that every sufficiently long prefix
has a suffix that is an $\alpha^2$-power if and only if $\bf x$ is ultimately
periodic.  This is an example of the ``local periodicity implies
global periodicity'' phenomenon.

Furthermore, they proved that the constant $\alpha^2$ is best possible.
Define $e(n)$ to be the largest exponent of a suffix of ${\bf f}[0..n-1]$ and
for a real number $\gamma$ 
define $M_\gamma = \{ n \suchthat e(n)\geq \gamma \} $.
Mignosi et al.~proved that for all $\epsilon > 0$,
the set $M_{\alpha^2- \epsilon}$ contains all but finitely many natural
numbers.

This latter result was proved as their Proposition 2, using a 
somewhat complicated induction.  In this note we
show how to prove an explicit version of the result using {\tt Walnut}, 
a free theorem-prover for automatic sequences \cite{Mousavi:2016,Shallit:2022}.
We do not claim our proof to be simpler than the previous one,
but it does give a more explicit
version of the theorem, and is more straightforward in some sense.
As a bonus, we gain a more detailed understanding of the repetitive
properties of the prefixes of $\bf f$.

\section{Prefixes}

Define, as usual, the Fibonacci numbers
by the recurrence $F_n = F_{n-1} + F_{n-2}$ with initial values
$F_0 = 0$, $F_1 = 1$, and the Lucas numbers
$L_n = L_{n-1} + L_{n-2}$ with initial values
$L_0 = 2$, $L_1 = 1$.
By the Binet formulas we have
$F_n = (\alpha^n - \beta^n)/\sqrt{5}$, where
$\beta = (1-\sqrt{5})/2$, and
$L_n = \alpha^n + \beta^n$, which also serve to define
$F_n$ and $L_n$ for negative indices. Note that
$F_{-n} = (-1)^{n+1} F_n$ and
$L_{-n} = (-1)^{n+1} L_n$ 
for all integers $n$.

By a result of Currie and Saari \cite[Cor.~4]{Currie&Saari:2008},
we know that all (least) periods of the
Fibonacci word are of length $F_n$ for $n \geq 2$.  
(This can also be proved with {\tt Walnut}; see 
\cite[Thm.~3.15]{Mousavi&Schaeffer&Shallit:2016}.)
Thus, if we are to look
for large repetitions, we can restrict our attention to 
periods of length a Fibonacci number.

We start by determining $G := M_{\alpha^2}$: that is,
those ``good'' $n$ for which ${\bf f}[0..n-1]$
has a suffix of exponent $> \alpha^2$.  
With {\tt Walnut} we can characterize
the set $$G =
\{13, 14, 22, 23, 24, 26, 27, 34, 35, 36, 37, 38, 39, 40, 43, \ldots \}.$$
To do so, we need
to phrase the assertions as a first-order logical formula in a 
language {\tt Walnut} can understand.
We represent numbers in the so-called
Fibonacci (or Zeckendorf) numeration systems;
see \cite{Lekkerkerker:1952} or \cite{Zeckendorf:1972} for more details.
Here is the {\tt Walnut} code:
\begin{verbatim}
reg isfib msd_fib "0*10*":
reg evenfib msd_fib "0*1(00)*":
reg oddfib msd_fib "0*10(00)*":
reg adjfib msd_fib msd_fib "([0,0]*[1,1])|[0,0]*[1,0][0,1][0,0]*":
def ffactoreq "?msd_fib At (t<n) => F[i+t]=F[j+t]":
def suff "?msd_fib n>=x & x>=y & y>=1 & $ffactoreq(n-x,(n+y)-x,x-y)":
reg shift {0,1} {0,1} "([0,0]|[0,1][1,1]*[1,0])*":
def phi2n "?msd_fib (s=0 & n=0) | Ex,y $shift(n-1,x) &
   $shift(x,y) & s=y+2":
def good "?msd_fib Ex,y,z $suff(n,x,y) & $phi2n(y,z) & x>z":
\end{verbatim}
The explanation is as follows:
\begin{itemize}
\item {\tt isfib} asserts that its argument is a positive Fibonacci
number.

\item {\tt evenfib} asserts that its argument is a positive
even-indexed Fibonacci number.

\item {\tt oddfib} asserts that its argument is a positive
odd-indexed Fibonacci number.

\item {\tt adjfib} asserts that its two arguments are
$F_{k+1}$ and $F_k$ for some integer $k \geq 1$.

\item {\tt factoreq} asserts that ${\bf f}[i..i+n-1] = {\bf f}[j..j+n-1]$.

\item {\tt phi2n} is code from \cite[p.~278]{Shallit:2022} that
asserts that $s = \lfloor \alpha^2 n \rfloor$.

\item {\tt suff} asserts that ${\bf f}[0..n-1]$ has a suffix
of length $x$ with period $y \geq 1$.

\item {\tt good} asserts that ${\bf f}[0..n-1]$ has a suffix
with exponent $> \alpha^2$.
\end{itemize}

The last command produces a DFA (deterministic finite automaton) of 12
states accepting the Fibonacci representation of those $n$
belonging to $G$, in Figure~\ref{good}.
\begin{figure}[H]
\begin{center}
\includegraphics[width=6.5in]{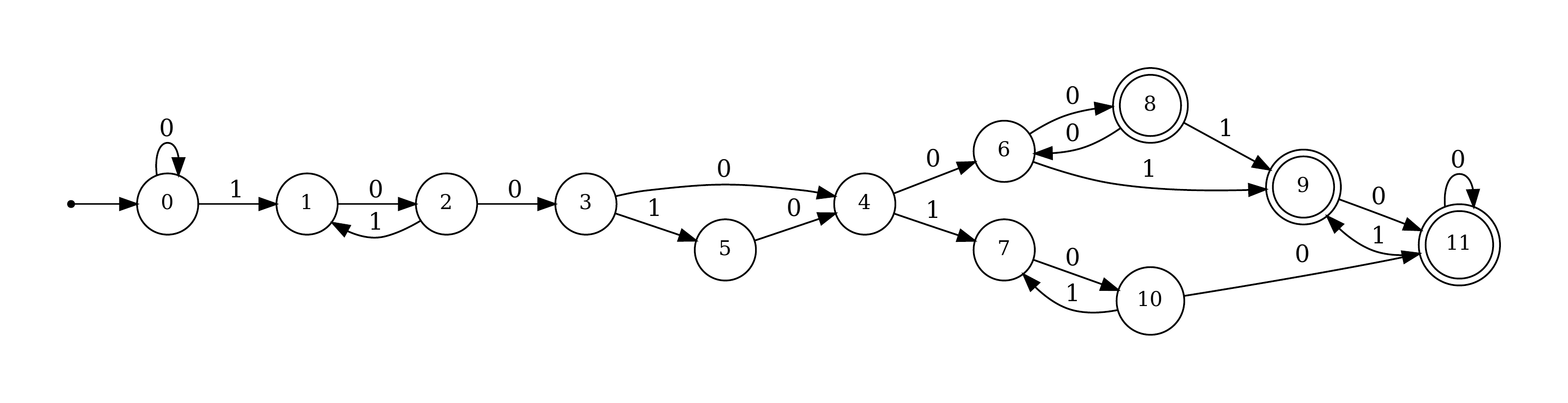}
\end{center}
\caption{Fibonacci automaton recognizing the set of good $n$.}
\label{good}
\end{figure}

We claim that the only $n\geq 2$ that are not in $G$ are either 
\begin{itemize}
\item[(1)] of the form $n = F_i - F_j - 1$ for $i \geq 5$ and $3 \leq j \leq i-2$; or
\item[(2)] of the form $n = F_i - F_{2j+1}$ for $i \geq 5$
and $1 \leq j \leq (i-3)/2$.
\end{itemize}
Let $B_1$ be the set of $n$ satisfying (1) and $B_2$ be the set of
$n$ satisfying (2).  Here are the first few terms of each set:
\begin{align*}
B_1 &= \{ 2, 4, 5, 7, 9, 10, 12, 15, 17, 18, 20, 25, 28, 30, 31, 33, \ldots \} \\
B_2 &= \{ 3, 6, 8, 11, 16, 19, 21, 29, 32, 42, 50, 53, 55, 76, 84, 87, \ldots \} .
\end{align*}

We can verify the claim above as follows.  For (1) we assert
that $x = F_i$ and $y = F_j$ for some $i \geq j+2$ and $j \geq 3$.
This is the same as asserting that $(x,y)$ are adjacent
Fibonacci numbers with $x > 2y$ and $y \geq 2$.
For (2) we let $x = F_i$ and $y = F_{2j+1}$ for some $i \geq 2j+3$ and $j \geq 1$.
This is the same as asserting that $(x,y)$ are adjacent
Fibonacci numbers, $y$ is an odd-indexed Fibonacci number, $x>2y$, and $y \geq 2$.
\begin{verbatim}
def b1 "?msd_fib $isfib(x) & $isfib(y) & x>2*y & y>=2 & n+1+y=x":
def b2 "?msd_fib $isfib(x) & $oddfib(y) & x>2*y & n>=y & y>=2 & n+y=x":
\end{verbatim}
We now verify that for all $n \geq 2$ a number is in $G$ if and only
if it is in neither $B_1$ nor $B_2$.
\begin{verbatim}
eval test "?msd_fib An (n>=2) => ( (~$good(n)) <=> 
   (Ex,y $b1(n,x,y)|$b2(n,x,y)) )":
\end{verbatim}
And {\tt Walnut} returns {\tt TRUE}.

For the $n$ satisfying either condition (1) or (2), we need to locate
a suffix of ${\bf f}[0..n-1]$ of large exponent.
First we show
\begin{lemma}
Suppose $n = F_i - F_j - 1$ for $i \geq 5$ and $3 \leq j \leq i-2$.   Then
${\bf f}[0..n-1]$ 
\begin{itemize}
\item[(i)] has period $F_{i-2}$, or
\item[(ii)] has a suffix of length $F_j - 1$ and period $F_{j-2}$.
\end{itemize}
\label{lem1}
\end{lemma}

\begin{proof}
The idea is to use $x$ to represent $F_{i-1}$, $y$ to represent $F_{i-2}$,
$w$ to represent $F_{j-1}$ and $z$ to represent $F_{j-2}$.
We use the command
\begin{verbatim}
eval check1 "?msd_fib An,x,y,z,w (n>=2 & $b1(n,x+y,w+z) & $adjfib(x,y) &
  $adjfib(w,z) & z>=1 & x>w & n+w+z+1=x+y) 
  => ($suff(n,w+z-1,z) & $suff(n,n,y))":
\end{verbatim}
and {\tt Walnut} returns {\tt TRUE}.
\end{proof}

\begin{lemma}
Suppose $n = F_i - F_{2j+1}$ for $i \geq 5$ and $1 \leq j \leq (i-3)/2$.
Then ${\bf f}[0..n-1]$ has period $F_{i-2}$.  If further $j \geq 2$ then
${\bf f}[0..n-1]$ has a suffix of length $F_{2j+1}$ and
period $F_{2j-1}$.
\label{lem2}
\end{lemma}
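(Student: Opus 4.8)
The plan is to imitate the proof of Lemma~\ref{lem1}: encode the two assertions as a single first-order statement over the Fibonacci numeration system and have {\tt Walnut} verify it. The only real work is choosing variables that name the relevant Fibonacci numbers and confirming that the resulting formula captures exactly the hypotheses $i \ge 5$ and $1 \le j \le (i-3)/2$.

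First I would represent $F_{i-1}$ by $x$ and $F_{i-2}$ by $y$, so that $x+y = F_i$ and $\mathtt{adjfib}(x,y)$ holds, and I would represent $F_{2j}$ by $u$ and $F_{2j-1}$ by $v$, so that $u+v = F_{2j+1}$, $\mathtt{adjfib}(u,v)$ holds, and $\mathtt{oddfib}(v)$ forces the index of $v$ to be odd. Then $\mathtt{b2}(n,x+y,u+v)$ encodes $n = F_i - F_{2j+1}$ together with $F_i > 2F_{2j+1}$, which (as already noted when $B_2$ was defined) is equivalent to $2j+1 \le i-2$, i.e.\ $j \le (i-3)/2$; meanwhile $v \ge 1$ encodes $j \ge 1$ and $v \ge 2$ encodes $j \ge 2$. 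The conclusion ``${\bf f}[0..n-1]$ has period $F_{i-2}$'' becomes $\mathtt{suff}(n,n,y)$ (a suffix of length $n$, i.e.\ the whole prefix, with period $y$), and ``${\bf f}[0..n-1]$ has a suffix of length $F_{2j+1}$ and period $F_{2j-1}$'' becomes $\mathtt{suff}(n,u+v,v)$. Concretely I would run
\begin{verbatim}
eval check2 "?msd_fib An,x,y,u,v (n>=2 & $b2(n,x+y,u+v) & $adjfib(x,y) &
  $adjfib(u,v) & $oddfib(v) & v>=1 & x>u & n+u+v=x+y)
  => ($suff(n,n,y) & (v>=2 => $suff(n,u+v,v)))":
\end{verbatim}
and expect {\tt Walnut} to return {\tt TRUE}.

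The step I expect to be the main obstacle is not the computation but making the encoding tight in both directions: I must be sure that every pair $(i,j)$ with $i \ge 5$ and $1 \le j \le (i-3)/2$ is realized by a tuple $(n,x,y,u,v)$ satisfying the hypotheses of {\tt check2}, and conversely that every such tuple corresponds to a legitimate $(i,j)$. In particular the boundary cases need care: $i = 5$, $j = 1$ (where $n = 3$ and only the period-$F_{i-2}$ claim is asserted, since $v = F_1 = 1 < 2$), and the extremal case $2j+1 = i-2$ (where $F_{2j+1} = F_{i-2}$, so one must check $n \ge F_{i-2}$ in order for $\mathtt{suff}$'s precondition $n \ge x \ge y \ge 1$ to be satisfied in the first conjunct). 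Once the lemma is available, the suffix it furnishes when $j \ge 2$ has exponent $F_{2j+1}/F_{2j-1}$, which increases to $\alpha^2$ as $j \to \infty$, while the period-$F_{i-2}$ statement gives the prefix itself exponent $(F_i - F_{2j+1})/F_{i-2}$; combining the two — using the suffix when $j$ is large and the whole prefix when $j$ is small — is what will ultimately give the explicit bound on $e(n)$ for the elements of $B_2$.
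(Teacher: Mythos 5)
Your approach is essentially the paper's: name $F_{i-1}$, $F_{i-2}$, $F_{2j}$, $F_{2j-1}$ by quantified variables constrained with \texttt{adjfib} and \texttt{b2}, and let {\tt Walnut} verify \texttt{suff(n,n,y)} and \texttt{suff(n,u+v,v)} (the paper does this with two commands, \texttt{check2a} and \texttt{check2b}, whose hypothesis $w \geq 2$ plays the role of your guard $v \geq 2$). However, your formula has one concrete defect: the extra hypothesis \texttt{\$oddfib(v)}. The automaton \texttt{oddfib}, defined by the regular expression \texttt{0*10(00)*}, accepts exactly $F_3, F_5, F_7, \ldots = 2, 5, 13, \ldots$; it does \emph{not} accept $1$, since the canonical Zeckendorf representation of $1$ has its digit in the $F_2$ position. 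For $j = 1$ the only pair of adjacent Fibonacci numbers summing to $F_3 = 2$ is $(u,v) = (1,1)$, so no tuple $(n,x,y,u,v)$ with $n = F_i - 2$ satisfies your hypotheses; those instances are vacuously excluded, and a \texttt{TRUE} answer to your \texttt{check2} would not establish the period-$F_{i-2}$ claim for $j = 1$. That case is part of the lemma and is genuinely used in Case 3 of the main theorem (the bound $e(n) \geq s(i,j)$ for $j \leq \lfloor i/6 \rfloor$ includes $j = 1$), so this is a real gap rather than a cosmetic one --- and, ironically, it sits exactly in the boundary analysis you flagged as the main thing to get right: you treated $j=1$ as handled by the guard $v \geq 2$, but the \texttt{oddfib} hypothesis silently removes it from the universal statement altogether.

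The fix is simply to delete \texttt{\$oddfib(v)} (and the redundant \texttt{v>=1}): \texttt{b2} already forces $u+v$ to be an odd-indexed Fibonacci number at least $2$, and \texttt{adjfib(u,v)} then pins $(u,v)$ down to $(F_{2j},F_{2j-1})$, including $(1,1)$ when $j=1$. With that change your single command becomes equivalent to the paper's pair \texttt{check2a}/\texttt{check2b} (your guard $v \geq 2$, i.e.\ $F_{2j-1} \geq 2$, encodes $j \geq 2$ just as the paper's $w \geq 2$ does), and the rest of your reasoning --- the realizability of every admissible $(i,j)$, the check that \texttt{suff}'s precondition $n \geq x \geq y \geq 1$ holds in both conjuncts, and the intended later use of the two bounds on $B_2$ --- is correct.
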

\begin{proof}
The idea is to use $x$ to represent $F_{i-1}$, $y$ to represent $F_{i-2}$,
$w$ to represent $F_{2j}$ and $z$ to represent $F_{2j-1}$.
We use the commands
\begin{verbatim}
eval check2a "?msd_fib An,x,y,z,w (n>=3 & $b2(n,x+y,w+z) & $adjfib(x,y) 
   & $adjfib(w,z) & n+w+z=x+y) => $suff(n,n,y)":
eval check2b "?msd_fib An,x,y,z,w (n>=3 & $b2(n,x+y,w+z) & $adjfib(x,y)
   & $adjfib(w,z) & n+w+z=x+y & w>=2) => $suff(n,w+z,z)":
\end{verbatim}
and {\tt Walnut} returns {\tt TRUE} for both.
\end{proof}

Our main result will now follow after some detailed estimates in
the next section.

\section{The main result}

We will need a classic identity about Fibonacci numbers \cite[p.~68]{Basin&Hoggatt:1963}, which
can be easily proved with the Binet formula, namely,
\begin{equation}
F_a F_{b+2} - F_{a+2} F_b  = (-1)^b F_{a-b} 
\label{useful}
\end{equation}
for all integers $a, b$.

Next, we need some estimates.
\begin{lemma}
For $i \geq 1$ we have
\begin{alignat}{2}
\alpha^2 + { {(-1)^i}  \over {F_{2i}}} &< \hphantom{ck} {{F_{i+2}} \over {F_i}} 
&&<\alpha^2 + { {(-1)^i}  \over { F_{2i} - (-1)^i}}  \label{eq1} \\
\alpha^2 + {1 \over {F_{2i-1}+2}} &< {{F_{2i+1} + 1}\over{F_{2i-1}}} &&<
\alpha^2 + {1 \over {F_{2i-1}}} \label{eq2} .
\end{alignat}
\end{lemma}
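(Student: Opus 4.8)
The plan is to establish all four inequalities by direct manipulation via the Binet formulas, treating the two displayed lines as essentially independent algebraic facts about Fibonacci numbers. The common engine will be identity~\eqref{useful} together with the elementary observation that $F_{2i} = F_i L_i$, which follows from the Binet formulas since $\alpha^i\beta^i = (\alpha\beta)^i = (-1)^i$ and hence $(\alpha^i-\beta^i)(\alpha^i+\beta^i) = \alpha^{2i}-\beta^{2i}$. The strategy for each inequality is to clear denominators so that the claim becomes a statement of the form ``a certain integer (or simple expression in Fibonacci/Lucas numbers) has a definite sign,'' and then verify that sign.

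\medskip

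\textbf{The first line \eqref{eq1}.} First I would compute $F_{i+2}/F_i - \alpha^2$ exactly. Writing $\alpha^2 = \alpha+1$ and using $F_{i+2} = F_{i+1}+F_i$, one gets $F_{i+2}/F_i - \alpha^2 = (F_{i+1} - \alpha F_i)/F_i$. Now the Binet formula gives $F_{i+1} - \alpha F_i = (\alpha^{i+1}-\beta^{i+1} - \alpha^{i+1}+\alpha\beta^i)/\sqrt5 = \beta^i(\alpha-\beta)/\sqrt5 = \beta^i$, since $\alpha - \beta = \sqrt5$. Therefore
\[
\frac{F_{i+2}}{F_i} - \alpha^2 \;=\; \frac{\beta^i}{F_i}.
\]
Because $\beta < 0$ we have $\beta^i = (-1)^i|\beta|^i$, so the claim \eqref{eq1} reduces to the two-sided estimate $1/F_{2i} < |\beta|^{-i} F_i < 1/(F_{2i}-(-1)^i)$ after dividing through by $|\beta|^i/F_i$ — equivalently, to $F_{2i} < F_i/|\beta|^i$ on the right-hand side inequality and $F_i/|\beta|^i < F_{2i} - (-1)^i$ on the left (the direction of the inequality flips between even and odd $i$ when one multiplies by $(-1)^i$, but both end up as the stated sandwich). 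Since $F_i/|\beta|^i = F_i \alpha^i = (\alpha^{2i} - (\alpha\beta)^i)/\sqrt5 = (\alpha^{2i} - (-1)^i)/\sqrt5$, and $F_{2i} = (\alpha^{2i}-\beta^{2i})/\sqrt5 = (\alpha^{2i} - \beta^{2i})/\sqrt5$ with $\beta^{2i} = |\beta|^{2i} < 1$, the inequality $F_{2i} < F_i\alpha^i$ becomes $-\beta^{2i} < -(-1)^i$, i.e. $(-1)^i < \beta^{2i} + 1$... one checks this holds for all $i\geq 1$ since $0 < \beta^{2i} < 1$. The other side is similar bookkeeping. I would present this as: compute $F_i\alpha^i - F_{2i} = ((-1)^{i+1} + \beta^{2i})/\sqrt5 \cdot(-1)$, oops — more cleanly, $F_i\alpha^i - F_{2i} = (\beta^{2i} - (-1)^i)/\sqrt5$, and then both halves of \eqref{eq1} follow from $0 < \beta^{2i} < 1 < 2$ by a short case check on the parity of $i$, using $F_{2i} = F_iL_i$ to rewrite $\beta^i/F_i = (-1)^i/(F_i\alpha^i) = (-1)^i/(F_{2i} + \text{correction})$ exactly.

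\medskip

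\textbf{The second line \eqref{eq2}.} Here I would handle $(F_{2i+1}+1)/F_{2i-1} - \alpha^2$ the same way. From the computation above with $i$ replaced by $2i-1$ (odd index), $F_{2i+1}/F_{2i-1} - \alpha^2 = \beta^{2i-1}/F_{2i-1} = -|\beta|^{2i-1}/F_{2i-1}$, a negative quantity. Adding the $+1$:
\[
\frac{F_{2i+1}+1}{F_{2i-1}} - \alpha^2 \;=\; \frac{1 - |\beta|^{2i-1}}{F_{2i-1}}.
\]
So \eqref{eq2} is the assertion that $1/(F_{2i-1}+2) < (1 - |\beta|^{2i-1})/F_{2i-1} < 1/F_{2i-1}$. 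The right inequality is immediate since $|\beta|^{2i-1} > 0$. The left inequality is equivalent to $F_{2i-1} + 2 > F_{2i-1}/(1-|\beta|^{2i-1})$, i.e. to $(F_{2i-1}+2)(1-|\beta|^{2i-1}) > F_{2i-1}$, i.e. to $2 > (F_{2i-1}+2)|\beta|^{2i-1}$; since $|\beta|^{2i-1} = \alpha^{-(2i-1)}$ and $F_{2i-1} < \alpha^{2i-1}/\sqrt5 \cdot(\text{something} < \alpha)$, more precisely $F_{2i-1}\,|\beta|^{2i-1} = F_{2i-1}/\alpha^{2i-1} < 1$, this reduces to the transparent bound $2|\beta|^{2i-1} + F_{2i-1}|\beta|^{2i-1} < 2\cdot\tfrac12 + 1 < 2$ for $i\geq 1$ (checking the $i=1$ base case, $F_1 = 1$, $|\beta| \approx 0.618$, directly).

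\medskip

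\textbf{Expected main obstacle.} The genuinely delicate part is the left half of \eqref{eq1}, where the upper bound on $F_{i+2}/F_i$ has the \emph{signed} correction $-(-1)^i$ in the denominator $F_{2i} - (-1)^i$, so one must be careful that the bound is tight enough in both parities and does not break for small $i$ (the case $i=1$, where $F_2 = 1$, $F_3 = 2$, $F_{2i} = 1$, so the right side of \eqref{eq1} is $\alpha^2 - 1/(1-(-1)) $ wait, $F_{2i} - (-1)^i = 1 - (-1) = 2$, giving $\alpha^2 - \tfrac12 \approx 2.118$, versus $F_3/F_1 = 2$ — good, and $i=2$: $F_4/F_2 = 3/1 = 3$ versus $\alpha^2 + 1/(F_4 - 1) = \alpha^2 + 1/2 \approx 3.118$). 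So I would verify $i = 1, 2$ by hand and then give the uniform argument for $i \geq 3$ via the exact expression $\beta^i/F_i$ and the elementary bounds $0 < \beta^{2i} < 1$. The rest is routine algebra with the Binet formulas, and I would keep it terse, perhaps relegating the four sign-checks to a single displayed line each.
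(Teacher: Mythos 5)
Your proposal is correct and follows exactly the route the paper intends: the paper's entire proof is the one line ``Directly follows from the Binet formula,'' and your argument simply carries out that Binet-formula computation in detail (the exact identities $F_{i+2}/F_i-\alpha^2=\beta^i/F_i$ and $\alpha^iF_i-F_{2i}=(\beta^{2i}-(-1)^i)/\sqrt5$, plus $0<\beta^{2i}<1$, do yield all four strict inequalities). Only cosmetic polish is needed (e.g., the throwaway bound ``$2\cdot\tfrac12+1<2$'' should be stated as ``$\le 1$ plus strictly $<1$, hence $<2$''), but there is no gap.
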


\begin{proof}
Directly follows from the Binet formula.
\end{proof}

Finally, we need some technical estimates.
\begin{lemma}
\leavevmode
\begin{itemize}
\item[(i)] $(F_{k+1} + 1)^2 \leq 3 F_{2k-1}$ for $k \geq 4$;
\item[(ii)] $F_{4k-2}^2 \geq 100 F_{2k+1}$ for $k \geq 3$.
\item[(iii)] $F_{2k+2} \leq 6 F_k^2$ for $k \geq 5$.
\item[(iv)] $F_{2k}^2 \geq 8 F_{2k+2}$ for $k \geq 4$.
\item[(v)] $F_{12k-4}^2 \geq 10000 F_{6k+3}$ for $k \geq 2$.
\item[(vi)] $F_{2k+1}^2 F_{6k+3} \leq 6 F_{6k-2}^2$ for 
$k \geq 2$.
\item[(vii)] $F_{4k+2}^2 \geq 4F_{6k+5}$ for $k\geq 3$.
\end{itemize}
\label{bnds}
\end{lemma}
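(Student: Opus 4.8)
The plan is to derive all seven inequalities directly from the Binet formulas, reducing each to an elementary inequality about the single number $\alpha$. Recall that $\alpha\beta=-1$, $\alpha+\beta=1$, $\alpha^2=\alpha+1$, $\sqrt5=\alpha-\beta$, and $|\beta|=1/\alpha<1$; squaring the Binet formula gives the exact identities $\sqrt5\,F_n=\alpha^n-\beta^n$ and $5F_n^2=\alpha^{2n}+\beta^{2n}-2(-1)^n$. Substituting these into each of (i)--(vii) and clearing the factors $5$ and $\sqrt5$ turns every part into a comparison between two real linear combinations of terms $\alpha^{c_1k+d_1}$, $\beta^{c_2k+d_2}$, $(-1)^k$, in which the $\beta$-powers and the integer constants together contribute an error bounded in absolute value by an explicit constant $C$.

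For the product in (vi) one first linearizes $F_{2k+1}^2F_{6k+3}$ using $5F_{2k+1}^2=L_{4k+2}+2$ together with $F_mL_n=F_{m+n}+(-1)^nF_{m-n}$, obtaining $F_{10k+5}+F_{2k+1}+2F_{6k+3}$ to compare with $6L_{12k-4}-12$. After this step, five of the seven inequalities --- (ii), (iv), (v), (vi), (vii) --- have the property that the two sides grow at \emph{different} exponential rates, with the side that the inequality requires to be eventually larger being the faster-growing one; for such a part the difference of the two sides is $A\alpha^{ek}(1+o(1))$ with $e>0$ and $A>0$, so the inequality holds once $\alpha^{ek}$ exceeds an explicit threshold, and the finitely many remaining values (for instance $k=2,3,4$ in (vi) and $k=4$ in (iv)) are dispatched by a direct computation of the Fibonacci numbers involved. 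The two borderline parts are (i) and (iii), where both sides are $\Theta(\alpha^{2k})$: there the argument succeeds exactly because the leading coefficients satisfy a strict inequality --- namely $15>\sqrt5\,\alpha^3=2\sqrt5+5$ for (i), and $6>\sqrt5\,\alpha^2=\sqrt5\,\alpha+\sqrt5$ for (iii) --- each of which is an easy consequence of $\alpha^2=\alpha+1$.

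The main obstacle is not any single estimate but the bookkeeping, because several of these inequalities are essentially tight at the smallest admissible $k$: (i) reads $36\le39$ at $k=4$, (iii) reads $144\le150$ at $k=5$, and (iv) reads $441\ge440$ at $k=4$, so a wasteful bound on the error term will not reach down to the stated lower bound on $k$. The clean way around this is to retain the $\beta$-terms with their signs (working from the exact identities above rather than from $|\beta^m|<1$), since the dominant $\beta$-correction is favorably signed in every case; the asymptotic threshold then comes down to the claimed range of $k$, leaving at most a base case or two to check by hand. As an aside, because $5F_n^2=L_{2n}-2(-1)^n$ and $F_{2n}=F_nL_n$, each of (i)--(vii) can alternatively be rephrased as a first-order statement about the Fibonacci and Lucas sequences at indices linear in $k$ and verified by {\tt Walnut} using the regular predicates for $x=F_n$ and $x=L_n$ in the {\tt msd\_fib} system; but the Binet estimates are faster to write down.
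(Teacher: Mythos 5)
Your plan is correct and is essentially the paper's: the paper likewise proves the lemma straight from the Binet formula, writing (for parts (i) and (iii)) the difference of the two sides as an explicit combination of Fibonacci numbers that is visibly nonnegative in the stated range, and leaving the remaining parts to the reader as simpler. Your leading-coefficient comparisons (e.g.\ $\sqrt5\,\alpha^3 = 2\sqrt5+5 < 15$ and $\sqrt5\,\alpha^2 < 6$), the linearization of (vi) via $5F_{2k+1}^2 = L_{4k+2}+2$ and $F_m L_n = F_{m+n}+(-1)^n F_{m-n}$, and the direct checks at the tight values ($36\le 39$, $144\le 150$, $441\ge 440$) amount to a more systematic bookkeeping of the same Binet computation.
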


\begin{proof}
We prove only (i) and (iii); the others are simpler and are left to the reader.
From the Binet form, we can verify that
$$3 F_{2k-1} - (F_{k+1} + 1)^2 = {1\over 5} (8 F_{2k-3} + 4 F_{2k-2} - 10 F_{k+1} -5 -2(-1)^k) \geq 0$$
for $k \geq 4$.  This proves (i).

Similarly, we can verify that
$$ 6F_k^2 - F_{2k+2} = {1\over 5} (3 F_{2k-1} - 4F_{2k-2} -12 (-1)^k) \geq 0$$
for $k \geq 5$.  This proves (iii).
\end{proof}

\begin{theorem}
For all $n \geq 1$ the prefix ${\bf f}[0..n-1]$ 
has a suffix of exponent $\geq \alpha^2 - 3 n^{-1/2}$.
\label{mainthm}
\end{theorem}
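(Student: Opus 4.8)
The plan is to combine the Walnut‐verified structure theorems (the characterization of $G = M_{\alpha^2}$ together with Lemmas~\ref{lem1} and~\ref{lem2}) with the Fibonacci estimates of the previous lemmas. First I would dispose of the finitely many small cases $n < N_0$ for some explicit $N_0$ by direct inspection (these can be checked by hand or by Walnut, since $e(n)$ is computable); for these, either $n \in G$, in which case $e(n) > \alpha^2 \geq \alpha^2 - 3n^{-1/2}$ and we are done, or $n$ is one of a handful of exceptions whose largest suffix exponent we tabulate and compare against $\alpha^2 - 3n^{-1/2}$ one at a time. So the substantive work is the asymptotic regime $n \geq N_0$.

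For $n \geq N_0$, if $n \in G$ there is nothing to prove, so assume $n \notin G$. By the \texttt{test} command, $n$ lies in $B_1$ or $B_2$. In case $n \in B_1$, write $n = F_i - F_j - 1$ with $i \geq 5$, $3 \leq j \leq i-2$; by Lemma~\ref{lem1}, ${\bf f}[0..n-1]$ either has period $F_{i-2}$ (giving exponent $n/F_{i-2} = (F_i - F_j - 1)/F_{i-2}$) or has a suffix of length $F_j - 1$ and period $F_{j-2}$ (giving a suffix of exponent $(F_j-1)/F_{j-2}$). The key point is that these two estimates pull in opposite directions as $j$ varies: the period-$F_{i-2}$ suffix is long (close to $\alpha^2 F_{i-2}$) when $j$ is small, while the short suffix has exponent close to $\alpha^2$ when $j$ is large; so one should pick whichever is better, and the crossover happens at a value of $j$ that is roughly $i/2$. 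Using $F_{i-2} \approx \alpha^{i-2}/\sqrt5$ and the two-sided bounds \eqref{eq1}, I would show that $\min$ of the two deficits from $\alpha^2$ is $O(\alpha^{-j}) + O(\alpha^{j-i})$, which is minimized around $\alpha^{-i/2}$; and since $n = F_i - F_j - 1 \asymp \alpha^i$ in this range (because $j \leq i-2$ forces $F_j \leq F_{i-2}$, so $n \geq F_i - F_{i-2} - 1 = F_{i-1} - 1$), we get a deficit of order $n^{-1/2}$, and the technical inequalities (i)–(iv) of Lemma~\ref{bnds} are exactly what is needed to push the constant down to $3$. Case $n \in B_2$ is handled the same way using Lemma~\ref{lem2}, the estimate \eqref{eq2}, and inequalities (v)–(vii); here $n = F_i - F_{2j+1}$ and one plays the period-$F_{i-2}$ suffix (of exponent $(F_i - F_{2j+1})/F_{i-2}$) against the suffix of length $F_{2j+1}$ and period $F_{2j-1}$ (of exponent $(F_{2j+1}+1)/F_{2j-1} - 1/F_{2j-1} \cdot \text{(correction)}$, controlled by \eqref{eq2}).

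The routine but lengthy part is the bookkeeping: in each case one must (a) express the candidate exponent exactly in terms of Fibonacci numbers, (b) bound its distance from $\alpha^2$ above and below using the Binet formula via \eqref{eq1}–\eqref{eq2}, (c) determine for which $j$ the "long period" bound beats the "short suffix" bound, and (d) verify that at the crossover the common value of the deficit is at most $3 n^{-1/2}$, invoking the specific inequalities of Lemma~\ref{bnds} to absorb all lower-order terms. The main obstacle I anticipate is step (d) in the $B_2$ case: there the relevant indices ($F_{6k\pm}$ versus $F_{2k+1}$, $F_{4k+2}$, etc.) are spread out enough that the crude bound $n \asymp \alpha^i$ is not tight enough by itself, and one has to be careful about exactly which sub-range of $j$ each of the pieces (v), (vi), (vii) governs — essentially the exceptional set $B_2$ is sparse but its elements are "deep" exceptions, so the worst case for the constant $3$ lives here, and checking that the inequality is not merely $O(n^{-1/2})$ but genuinely $\leq 3n^{-1/2}$ requires the explicit numerical constants $100$, $10000$, $6$, $4$ appearing in Lemma~\ref{bnds}. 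Once those are in hand the conclusion is immediate, and one should double-check that the small-case threshold $N_0$ chosen at the outset is consistent with the ranges $k \geq 2, 3, 4, 5$ demanded by Lemma~\ref{bnds}.
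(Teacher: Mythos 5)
Your plan is essentially the paper's own proof: check small $n$ directly, split into $n \in G$, $n \in B_1$, $n \in B_2$ via the Walnut characterization, and in the latter two cases play the long period-$F_{i-2}$ suffix against the short suffix from Lemmas~\ref{lem1} and \ref{lem2}, locating the crossover in $j$ (at $\lceil i/2\rceil$ for $B_1$, near $i/6$ for $B_2$) and using \eqref{eq1}--\eqref{eq2} together with Lemma~\ref{bnds} to get the explicit constant $3$. The bookkeeping you defer is exactly what the paper carries out (monotonicity of $f,g,r,s$, the sign analysis of $\rho$ and $\psi$ via Binet, and the constant chasing), so your outline matches the published argument; the only slight miscall is that the tightest constant actually arises in the $B_1$ case with $i$ odd (about $\tfrac1{10}+\sqrt3\,\alpha$), not in $B_2$.
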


\begin{proof}
We can check the result for $1 \leq n \leq 21$. In what follows, then,
we assume $n > 21$.
There are three cases to consider:

\medskip

\noindent {\it Case 1:}  $n\in G$.  In this case, we have 
already seen above that there is a suffix
of ${\bf f}[0..n-1]$ of exponent $> \alpha^2$.

\medskip

\noindent{\it Case 2:}  $n \in B_1$, that is,
$n = n_{i,j} = F_i - F_j - 1$ for $i \geq 5$ and $3 \leq j \leq i-2$.
Fix $i$; using Lemma~\ref{lem1} 
we will find a lower bound for $e(n_{i,j})$ applicable
to all $j$, $3 \leq j \leq i-2$.  Note that $n_{i,j} \leq F_i$.
Define
\begin{align*}
f(i,j) &= (F_j - 1)/F_{j-2} \\
g(i,j) &= (F_i - F_j - 1)/F_{i-2} \\
\rho(i,j) &= (F_i - F_j - 1) F_{j-2} - (F_j -1) F_{i-2}.
\end{align*}
Clearly $g(i,j) \geq f(i,j)$ iff $\rho(i,j) \geq 0$.

We now claim that $f$ is increasing with increasing $j$.
To see this, note that
\begin{align*}
f(i,j+1) - f(i,j) &= {{F_{j+1} - 1} \over {F_{j-1}}} - {{F_j - 1} \over {F_{j-2}} }\\
&= {{F_{j+1} F_{j-2} -F_{j-1}F_j + F_{j-1} - F_{j-2}} \over {F_{j-1} F_{j-2}}} \\
&= {{ (-1)^{j+1} + F_{j-3} \over {F_{j-1} F_{j-2}}}} \geq 0
\end{align*}
for $j \geq 3$.  Here we have used
Eq.~\eqref{useful} with $a = j-2$ and $b = j-1$.

Similarly, $g$ is decreasing with increasing $j$.
To see this, note that
$$ g(i,j+1)-g(i,j) = {{F_j - F_{j+1}} \over {F_{i-2}}} \leq 0$$
for $i \geq 3$.  Hence, for each fixed $i \geq 3$,
there is a ``crossover point'' where the graphs of
$f(i,j)$ and $g(i,j)$ cross.   This crossover point determines
where $\min_j \max( f(i,j), g(i,j) )$ occurs.  See
Figure~\ref{fig2} for an example for $i = 20$.
\begin{figure}[H]
\begin{center}
\includegraphics[width=5.5in]{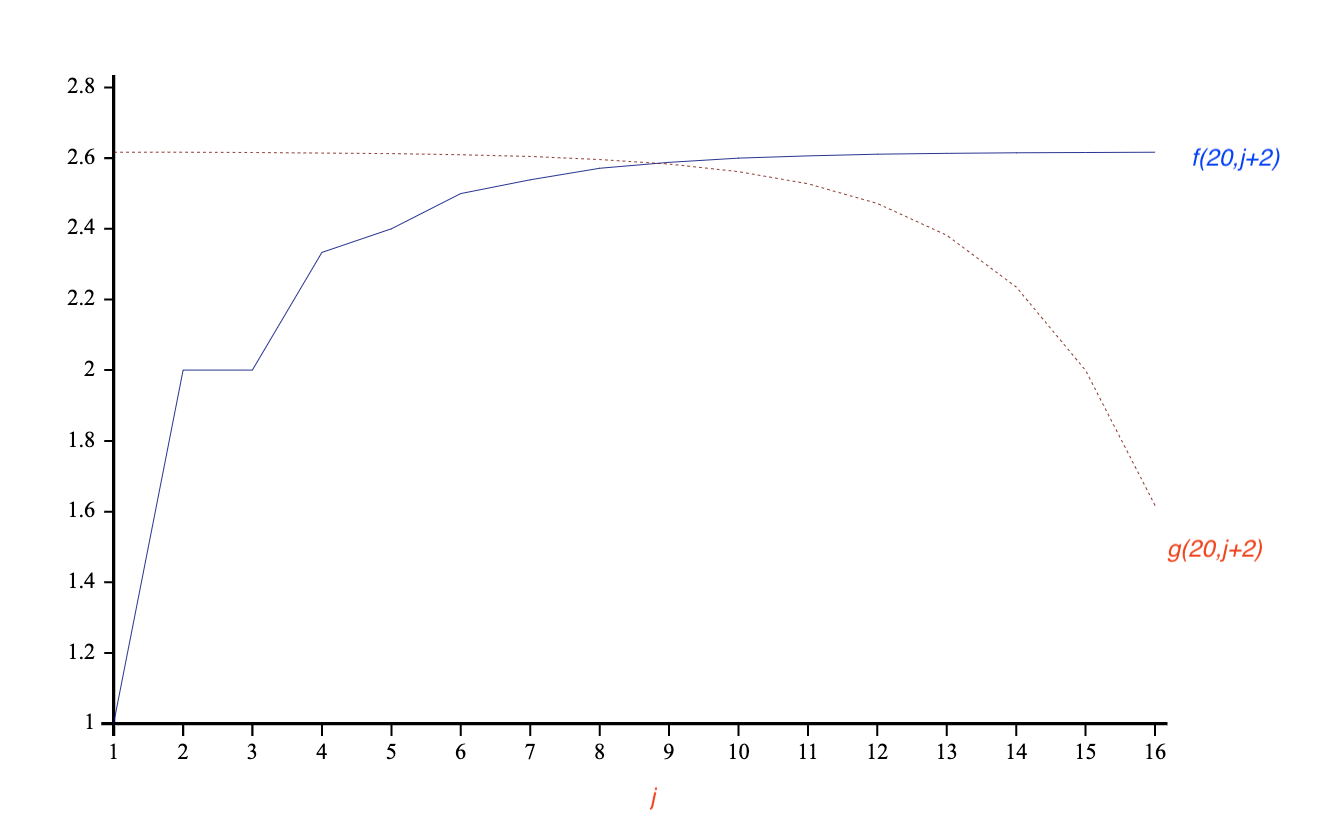}
\end{center}
\vskip-.3in
\caption{The crossover point for $f(20,j+2)$ versus $g(20,j+2)$.}
\label{fig2}
\end{figure}

Observe that $\rho(i,j)$ is increasing with increasing $i$:
\begin{align*}
\rho(i+1,j)-\rho(i,j) &= (F_{i+1} - F_j - 1)F_{j-2} - (F_j - 1)F_{i-1}
- (F_i - F_j - 1)F_{j-2} + (F_j - 1) F_{i-2} \\
&= F_{i-1} F_{j-2} - (F_j - 1) F_{i-3} \\
&= F_{i-3} - (-1)^j F_{i-j-1} \geq 0.
\end{align*}
Here we used Eq.~\eqref{useful} with $a = i-3$ and $b = j-2$.

Using the Binet form one can verify that
\begin{align*}
\rho(2k+1,k+1) &= {1\over 5}(L_{2k-2} - 5F_{k-1} - 5 F_{-k} - 3(-1)^k) \\
\rho(2k+2,k+2) &= {1\over 5} (-L_{2k-2} - 5F_k + 5 F_{-k} + 3(-1)^k) .
\end{align*}
It follows that $\rho(2k+2,k+1) \geq \rho(2k+1,k+1) > 0$ for $k \geq 4$ and
and $\rho(2k+1,k+2) \leq \rho(2k+2,k+2) < 0$ for $k \geq 1$.
Set $j' = \lceil i/2 \rceil$.  
Hence $g(i,j') - f(i,j') \geq 0$ for $i \geq 6$ and
$g(i,j'+1) - f(i,j'+1) < 0$ for $i \geq 0$.   
It follows that if $n \in B_1$, and $n = F_i - F_j - 1$ for some $i \geq 8$,
then $e(n) \geq \min( g(i, j'), f(i, j' + 1))$.

There are now two subcases to consider:  $i$ is odd and $i$ is even.

Suppose $i$ is odd.
Then $i = 2k+1$ and $j' = k+1$ for some $k$.
Then using the Binet form one can verify that
$$ F_k F_{2k-1} (f(2k+1,k+2)-g(2k+1,k+1)) = 
 {1\over 10} ( -2(-1)^k + 2L_{2k-3} + 10 F_k + 5 F_{-k} + 5 L_{-k}) > 0$$
 for $k \geq 0$.
Hence $e(n) \geq g(2k+1,k+1)$.

From Lemma~\ref{bnds} (i) we know that $(F_{k+1} + 1)^2/F_{2k-1} \leq 3$ for
$k \geq 4$.  From Eq.~\eqref{eq1} we know that
$F_{2k+1}/F_{2k-1} \leq \alpha^2$.
Multiplying these together and dividing by $F_{2k+1}$, we get
$$(F_{k+1} +1)^2  /F_{2k-1}^2 \leq 3 \alpha^2 F_{2k+1}^{-1}
	\leq 3 \alpha^2 n^{-1}$$
for $k \geq 4$.
Hence 
$$(F_{k+1} +1)/F_{2k-1} \leq \sqrt{3} \alpha n^{-1/2}$$
for $k \geq 3$.
Then 
\begin{align*}
e(n) \geq g(2k+1,k+1) &= {{F_{2k+1}}  \over {F_{2k-1}}} -
	{{F_{k+1} + 1} \over {F_{2k-1}}}     \\
&\geq (\alpha^2 - {1 \over {F_{4k-2}}}) - \sqrt{3} \alpha n^{-1/2} \\
&\geq \alpha^2 - {1\over 10} n^{-1/2} - \sqrt{3} \alpha n^{-1/2}  
	\quad \text{(by Lemma~\ref{bnds} (ii))} \\
& \geq \alpha^2 - 3 n^{-1/2}
\end{align*}
for $k \geq 2$.

If $i$ is even then $i = 2k+2$ and $j' = k+1$.
Then using the Binet
form one can verify that
$$ F_k F_{2k} (g(2k+2,k+1) - f(2k+2,k+2)) =
 {1 \over 10} ( 2(-1)^k + 2 L_{2k-1} - 10F_k - 5 F_{1-k} + 5L_{1-k}) \geq 0$$
for $k \geq 3$.
Hence $e(n) \geq f(2k+2,k+2)$.

Now from Lemma~\ref{bnds} (iv) we get $-1/F_{2k} \geq -8^{-1/2} F_{2k+2}^{-1/2} \geq -8^{-1/2} n^{-1/2}$.
Similarly from Lemma~\ref{bnds} (iii) we get
$-1/F_k \geq -\sqrt{6} F_{2k+2}^{-1/2} \geq - \sqrt{6}  n^{-1/2}$.  Hence
\begin{align*}
e(n) \geq f(2k+2,k+2) &= {{F_{k+2} - 1} \over {F_k}} \\
&\geq (\alpha^2 - {1 \over F_{2k}}) - {1 \over {F_k}} \\
&\geq \alpha^2 - 8^{-1/2} n^{-1/2} - \sqrt{6} n^{-1/2} \\
& \geq \alpha^2 - 3n^{-1/2}.
\end{align*}

\medskip

\noindent{\it Case 3:}  $n \in B_2$, that is, $n = n_{i,j} = F_i - F_{2j+1}$
for $i \geq 5$ and $1 \leq j \leq (i-3)/2$.  (For these $n$ it turns
out that $e(n) \geq \alpha^2 - O(n^{-2/3})$, but for our theorem
it is not necessary to prove this stronger result.)

Using Lemma~\ref{lem2}, we find lower bound for $e(n_{i,j})$ applicable to all
$j$, $1 \leq j \leq (i-3)/2$.
Define
\begin{align*}
r(i,j) &= F_{2j+1}/F_{2j-1} \\
s(i,j)  &= (F_i - F_{2j+1})/F_{i-2} \\
\psi(i,j) &= (F_i - F_{2j+1})F_{2j-1} - F_{i-2} F_{2j+1} .
\end{align*}
Again we need to determine the ``crossover point'', illustrated in
Figure~\ref{fig3}.
\begin{figure}[htb]
\begin{center}
\includegraphics[width=6.1in]{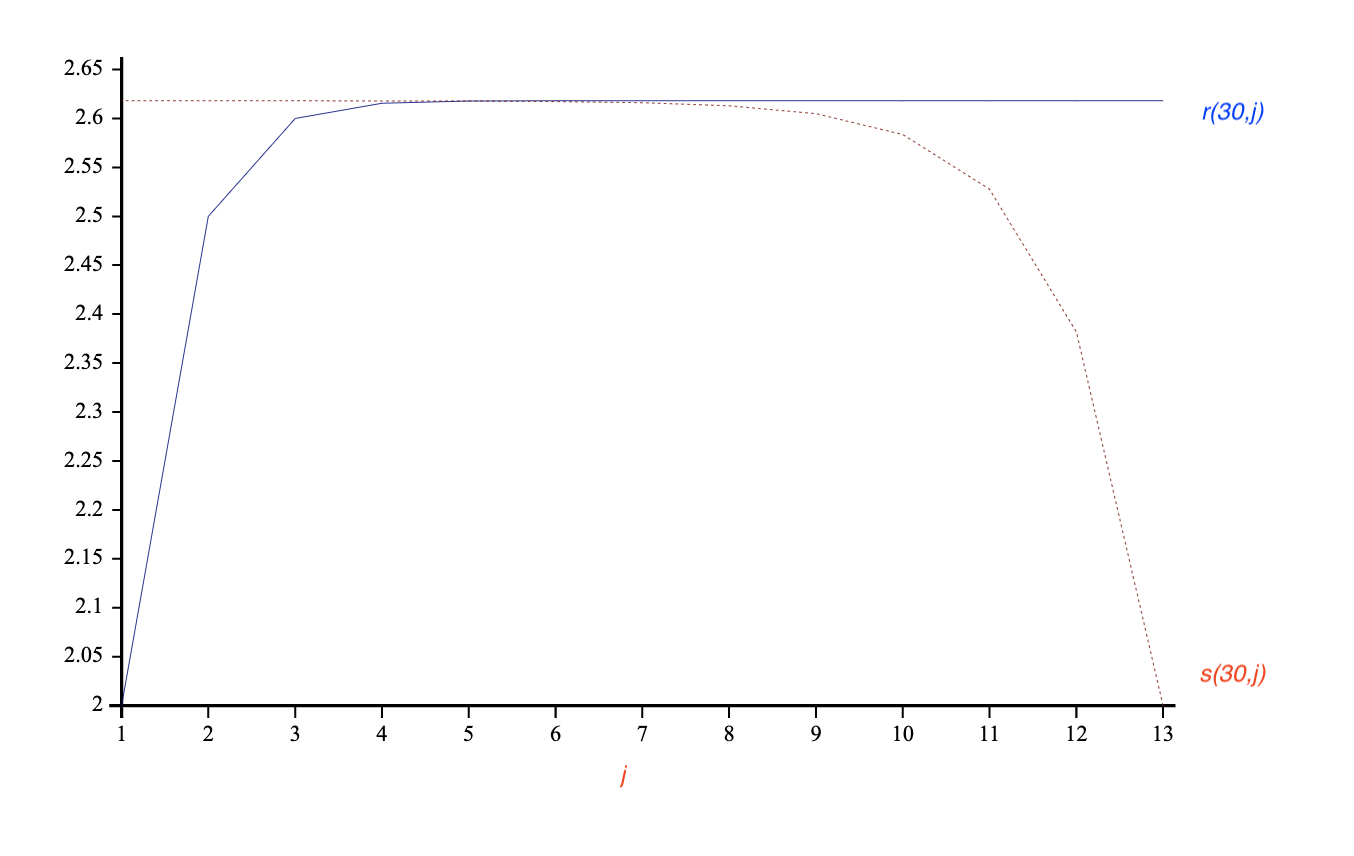}
\end{center}
\vskip -.3in
\caption{The crossover point for $r(30,j)$ versus $s(30,j)$.}
\label{fig3}
\end{figure}

Clearly $r$ is increasing with increasing $j$.  To see
this, note that
\begin{align*}
r(i,j+1) - r(i,j) &= {{F_{2j+3}} \over {F_{2j+1}}} -
	{{F_{2j+1}} \over {F_{2j-1}}}  \\
&= {{F_{2j+3} F_{2j-1} - F_{2j+1}^2} \over {F_{2j+1} F_{2j-1}}} \\
&= {1 \over {{F_{2j+1} F_{2j-1}}}} > 0.
\end{align*}
Here we used Eq.~\eqref{useful} with $a = 2j-1$ and $b = 2j+1$.

Next, we show that $s$ is decreasing with increasing $j$:
\begin{align*}
s(i,j+1)-s(i,j) &= {{F_i - F_{2j+3}} \over {F_{i-2}}} 
- {{F_i - F_{2j+1}} \over {F_{i-2}}} \\
&= {{F_{2j+1} - F_{2j+3}} \over {F_{i-2}}} < 0,
\end{align*}
where we used Eq.~\eqref{useful} with $a = 2j-1$ and $ b=i-3$.

Also $\psi(i,j)$ is increasing with increasing $i$,
provided $i \geq 2j+2$.  We have
\begin{align}
\psi(i+1,j) - \psi(i,j) &=
	(F_{i+1} - F_{2j+1})F_{2j-1} - F_{i-1} F_{2j+1} 
	- (F_i - F_{2j+1}) F_{2j-1} - F_{i-2} F_{2j+1} \nonumber\\
&= F_{i-1} F_{2j-1} - F_{i-3} F_{2j+1} \label{psi_inc} \\
&= F_{i-2j-2} \geq 0 . \nonumber
\end{align}

Using the Binet form we can easily check that
\begin{align*}
\psi(6k,k) &= (L_{4k-2} - 3)/5 \\
\psi(6k+5,k+1) &= (-L_{4k} -3)/5 .
\end{align*}

This, combined with \eqref{psi_inc} above, gives
$$
\psi(6k+5,k) \geq \psi(6k+4,k) \geq \psi(6k+3,k)
\geq \psi(6k+2,k) \geq \psi(6k+1,k) \geq \psi(6k,k) \geq 0 $$
and 
\begin{multline*}
\psi(6k,k+1) \leq \psi(6k+1,k+1) \leq \psi(6k+2,k+1) 
\leq \psi(6k+3,k+1) \leq \psi(6k+4,k+1)  \\
\leq \psi(6k+5,k+1) \leq 0.
\end{multline*}

Now choose $j' = \lfloor i/6 \rfloor$.
Then $r(i,j') \leq s(i,j')$ and
$r(i,j'+1) \geq s(i, j'+1)$ for $i \geq 1$.
So for $i \geq 1$ we have $e(n) \geq \min(s(i,j'), r(i,j'+1))$.

There are now two cases to consider:     
\begin{itemize}
\item[(a)] $i = 6k+a$, $a \in \{0,1,2,3 \}$, and $j = k$, and
\item[(b)] $i = 6k+a$, $ a \in \{4,5 \}$, and $j = k$.
\end{itemize}

In case (a) we have, for $i = 6k$, that
\begin{align*}
F_{2k+1} F_{6k-2} (r(6k,k+1)-s(6k,k)) 
&= F_{6k-2} F_{2k+3} - F_{2k+1} F_{6k} + F_{2k+1}^2 \\
&= {1 \over {20}} (7 L_{4k-3} + 15 F_{4k} + 8) \geq 0 . 
\end{align*}

For $i = 6k+1$:
\begin{align*}
F_{2k+1} F_{6k-1} (r(6k+1,k+1)-s(6k+1,k)) 
&= F_{6k-1} F_{2k+3} - F_{2k+1} F_{6k+1} +F_{2k+1}^2 \\
&= {1 \over 5} (-2L_{4k-3} + 5 F_{4k} + 2) \geq 0. 
\end{align*}

For $i = 6k+2$:
\begin{align*}
F_{2k+1} F_{6k} (r(6k+2,k+1) - s(6k+2,k))
&= F_{6k} F_{2k+3} - F_{2k+1} F_{6k+2} + F_{2k+1}^2 \\
&= {1 \over {10}} (L_{4k-3} + 5 F_{4k} + 4) \geq 0. 
\end{align*}

For $i = 6k+3$:
\begin{align*}
F_{2k+1} F_{6k+1} (r(6k+3,k+1)- s(6k+3,k))
&= F_{6k+1} F_{2k+3} - F_{2k+1} F_{6k+3} + F_{2k+1}^2 \\
&= {1 \over {20}} (-3L_{4k-3} + 5F_{4k} + 8) \geq 0.
\end{align*}

Hence
\begin{align*}
e(n) \geq s(6k+a,k) &= {{F_{6k+a} - F_{2k+1}} \over {F_{6k+a-2}}} \\
&\geq (\alpha^2 - {1 \over {F_{2k+2a-4}}} ) -
{{F_{2k+1}} \over {F_{6k+a-2}}} \\
&\geq \alpha^2 - (1/100) F_{6k+3}^{-1/2} - \sqrt{6} F_{6k+a}^{-1/2} 
	\quad \text{(by Lemma~\ref{bnds} (vi))} \\
&\geq \alpha^2 - (1/100+\sqrt{6}) n^{-1/2} \\
&\geq \alpha^2 - 3n^{-1/2}. 
\end{align*}

In case (b) we have $a \in \{4,5 \}$.  Then
\begin{align*}
e(n) \geq r(6k+a,k+1) &= {{F_{2k+3}} \over {F_{2k+1}}}  \\
&\geq \alpha^2 - {1 \over {F_{4k+2}}} \\
&\geq \alpha^2 - {1\over 2} F_{6k+5}^{-1/2} \\
&\geq \alpha^2 - {1 \over 2} F_{6k+a}^{-1/2} \\
& \geq \alpha^2 - 2 n^{-1/2} \quad \text{(by Lemma~\ref{bnds} (vii))} 
\end{align*}
for $k \geq 3$.

Thus in all cases $e(n) \geq \alpha^2 - 3n^{-1/2}$.  This completes the proof.
\end{proof}

\section{Final remarks}

For real numbers $\gamma$,
the set $M_{\gamma}$ can be quite complicated.  For example,
Rampersad \cite{Rampersad:2023} recently studied the set $M_3$.
In {\tt Walnut} one can create a Fibonacci automaton accepting
the set $M_{p/q}$, for natural numbers $p, q$, with a command of
the form
\begin{verbatim}
def emmpq "?msd_fib Ex,y $suff(n,x,y) & p*y<=q*x":
\end{verbatim}
where {\tt p} is replaced by the specific number $p$
and {\tt q} is replaced by $q$.

Given a rational number $p/q < \alpha^2$, {\tt Walnut}
can also determine the largest $n$ for
which $e(n) < p/q$, as follows:
\begin{verbatim}
def has_suff "?msd_fib Ex,y $suff(n,x,y) & p*y<q*x":
def largest_index "?msd_fib (~$has_suff(n)) & Am (m>n) => $has_suff(m)":
test largest_index 1:
\end{verbatim}
This code prints the Fibonacci representation of $n$. Again, one 
needs to replace {\tt p} with the specific number $p$ and
likewise with ${\tt q}$.

With more work, it should be possible to prove that, for $k \geq 6$, the largest
$n$ for which $e(n) < (F_{k+1}-1)/F_{k-1} $ is $n = F_{2k-1} - F_k - 1$.

\end{document}